\newtheorem{theorem}{Theorem}
\newtheorem{lemma}{Lemma}
\renewcommand{\P}{\mathbb{P}}
\newcommand{\E}{\mathbb{E}}
\newcommand{\Poisson}{\mathrm{Poisson}}
\newcommand{\Rbb}{\mathbb{R}}
\newcommand{\Zbb}{\mathbb{Z}}
\newcommand{\goesto}{\rightarrow}
\newcommand{\set}[1]{\left\{#1\right\}}
\title{\LARGE \bf Infinite Server Queueing Networks with Deadline
  Based Routing }
\author{Neal Master and Nicholas Bambos
  \thanks{Neal Master is supported a Stanford Graduate Fellowship
    (SGF) in Science \& Engineering.}
  \thanks{N. Master and N. Bambos are with the Department of
    Electrical Engineering, Stanford University, Stanford, CA, 94305,
    USA.  {\tt\small\{nmaster, bambos\}@stanford.edu}} }
\begin{document}

\maketitle
\thispagestyle{empty}
\pagestyle{empty}

\begin{abstract}
  Motivated by timeouts in Internet services, we consider networks of
  infinite server queues in which routing decisions are based on
  deadlines. Specifically, at each node in the network, the total
  service time equals the minimum of several independent service times
  (e.g. the minimum of the amount of time required to complete a
  transaction and a deadline). Furthermore, routing decisions depend
  on which of the independent service times achieves the minimum
  (e.g. exceeding a deadline will require the customer to be routed so
  they can re-attempt the transaction). Because current routing
  decisions are dependent on past service times, much of the existing
  theory on product-form queueing networks does not apply. In spite of
  this, we are able to show that such networks have product-form
  equilibrium distributions. We verify our analytic characterization
  with a simulation of a simple network. We also discuss extensions of
  this work to more general settings.
\end{abstract}

\section{Introduction}
Infinite server queues are an important stochastic modeling tool for a
diverse array of disciplines: they are used to study Internet
services~\cite{Urgaonkar_MultiTier_2005}, fragmentation in dynamic
storage allocation~\cite{Coffman_Fragmentation_1985}, software
reliability~\cite{Dohi_SW_2002}, and call
centers~\cite{Whitt_CallCenters_1999}. Although having an infinite
number of servers is typically only an approximation of reality, such
approximations are often very useful. In the case of Internet
services, these approximations are also becoming increasingly
reasonable. Cloud based web hosting from companies like Amazon,
Google, and Microsoft allows businesses to dynamically provision their
web services so they can accommodate almost arbitrarily large customer
demand \cite{Zhang_Cloud_2010}. Because customer demand is (almost)
always met immediately, infinite server queues offer both natural and
tractable stochastic models.

While infinite server queues are useful by themselves, they become
even more powerful when used in conjunction with queueing network
theory. In particular, for many types of queueing networks it is easy
to compute product-form stationary distributions for the number
customers at each node in the network. The term product-form refers to
the fact that the joint distribution is merely a product of the
marginals and this drastically simplifies the analysis of seemingly
intractable systems. The first major theoretical result on
product-form distributions for queueing networks was presented by
Jackson \cite{Jackson_1963}. This seminal work was later extended by
Baskett, Chandy, Muntz, and Palacios in the study of what are now
called BCMP networks \cite{BCMP_1975}. Subsequently, Kelly used the
idea of quasi-reversibility to study Jackson and BCMP networks in the
more general framework of Kelly networks
\cite{Kelly_Networks_1979}. Kelly networks have been used and built
upon in several ways; see \cite[Chapter~4]{Chen_Yao} and the
references therein for more details. 

A particularly interesting development is the notion of insensitivity
to the service time distributions (see \cite[Chapter~3]{Walrand} and
the references therein). Insensitivity refers to the fact that many
product-form distributions depend on the service time distributions
only through their means and hence, general service times can easily
be incorporated into queueing network models. For instance, even if
the service times are not exponential, applying Jackson's Theorem
\cite{Jackson_1963} as if they were exponential will often yield the
correct product-form distribution. Insensitivity results still
typically require Poisson arrivals, but allowing for general service
time distributions offers considerable modeling flexibility.

Although the product-form results mentioned above are very general and
powerful, they typically require Markovian routing. Formally, this
means that when a customer completes service at a particular node and
is routed elsewhere, the routing decision must be independent of the
past evolution of the network.  In particular, requiring Markovian
routing precludes routing that is dependent on previous service
times. This is limiting because we are interested in a model in which
routing decision are based on service times through service time
constraints, e.g. deadlines.

In particular, we are motivated by Internet services with timeouts
\cite{Russo_Timeouts_2009}. For example, consider a customer who uses
an online financial service with multi-factor authentication for
logging in. There is a deadline for the verification process -- if the
customer takes too long then the verification process will fail and
the customer will need to try again. In this scenario, we can model
the log-in step as an infinite server queue because (assuming a cloud
hosting service is used) an arbitrary number of customers can sign-in
without waiting. The total service time is given by the minimum of two
independent service times -- the amount of time required to sign-in
and a deadline. Moreover, the routing decision depends on which
service time achieves this minimum. Indeed, if the timeout occurs and
the service time is the deadline, then the customer will be re-routed
back to the sign-in page. If the timeout does not occur then the
customer will successfully enter the system. Even in this very simple
scenario, we see that timeouts and deadlines can drive routing
decisions. As a result, the routing is generally not Markovian and we
are unable to apply the typical queueing theory to find the
equilibrium distribution of the system.

Although computing the equilibrium distribution of a queueing system
with non-Markovian routing is not necessarily straightforward, knowing
the equilibrium distribution can be quite useful. In the case of
Internet services, the equilibrium distribution can be used for
marketing. For instance, the average number of customers on a page and
the average amount of time they spend on that same page are useful for
pricing display advertisements \cite{Goldfarb_Advertising_2011}. While
this information can be collected empirically, getting the full
distribution can be time consuming if the services times are
heavy-tailed, as is often the case
\cite{Downey_LongTailed_2001}. Consequently, there is substantial
value in having an analytic model that can be used to understand and
design web services without running full experiments.

Given this motivation and background, the remainder of the paper is
organized as follows. In Section~\ref{sec:model_theory}, we present
our model and position it relative to the existing theory. Since our
model cannot be immediately solved by applying existing results, in
Section~\ref{sec:result} we show that our model does in fact have a
product-form equilibrium distribution. In
Section~\ref{sec:simulation}, we verify this analytic result with a
simulation. We conclude in Section~\ref{sec:conclusion}.

\section{Model Formulation and\\Limitations of Existing Network Theory\label{sec:model_theory}}
In this section, we formally outline the stochastic model of
interest. We then discuss the applicability of the theory of Jackson
and Kelly networks. In particular, we show that if all service times
are exponentially distributed then we can apply Jackson's Theorem
\cite{Jackson_1963} to find the product-form distribution. However, if
the service times are general, then we cannot apply insensitivity and
quasi-reversibility results (e.g. \cite[Chapter~3]{Walrand}) to find a
product form equilibrium distribution. This demonstrates the nuances
of our model.

\subsection{Model Formulation and Motivation\label{sec:model}}
We want to model infinite capacity service systems in which customers'
service times and routing probabilities are impacted by deadlines. For
example, consider a system in which customers experience a natural
service time but are ejected if their service time exceeds a fixed
deadline. In this case, the total service time is the minimum of the
natural service time and the deadline. Moreover, the customer will be
routed differently based on whether or not he experienced the deadline
or the natural service time. More generally, we consider systems in
which customers experience a service time that is the minimum of
several independent service times. Customers are routed stochastically
but the routing matrix will change based on which of the competing
service times achieved the minimum.

Formally, we consider a queueing network with $N$ nodes. Each
node has an infinite number of servers. At node $n$, customers
experience $K$ independent competing service times
$$\set{S_{n,1}, \hdots, S_{n,K}}$$
with a total service time of 
$$S_n = \min\set{S_{n,1}, \hdots, S_{n,K}}.$$
We assume that the probability that more than one of the service times
achieves the minimum is zero. We allow for some of these competing
service times to be infinite. Let $S_{n,k}^{(i)}$ denote the $k^{th}$
service time for customer $i$ at node $n$ and
$$S_n^{(i)} = \min\set{S_{n,1}^{(i)}, \hdots, S_{n,K}^{(i)}}.$$
We assume that each customer's service times are independent of all
other customers' service times. Furthermore, if a customer is served
by the same node multiple times, then these service times are also
independent. If 
$$S_{n,k}^{(i)} = \min\set{S_{n,1}^{(i)}, \hdots,S_{n,K}^{(i)}} = S_n^{(i)}$$ 
then customer $i$ is routed from node $n$ to node $m$ with probability
$P^k_{n,m}$ and the customer exits the network with probability $1 -
\sum_{m=1}^N P^k_{n,m}$. Customers arrive to node $n$ according to a
Poisson process with rate $\lambda_n$. We assume that every customer
spends only a finite amount of time in the network (i.e. the network
is open).

We note that because we are interested in infinite server nodes, we
can focus on a single class of customers without any loss of
generality. This is because the jobs do not interact in a queueing
buffer. Indeed, if we had multiple classes, we could consider a
``copy'' of the network for each class. As long as customers do not
change classes, these copies will not interact just as the different
customer classes do not interact. The arrival rates would differ for
each class but the model would not fundamentally change.

\subsection{Limitations of Jackson and Kelly Network Theory\label{sec:jackson}}
Jackson's Theorem \cite{Jackson_1963} is a celebrated result for
attaining product-form equilibrium distributions of queueing
networks. Although Jackson's Theorem applies only to queueing networks
with exponential service times, the results can often be extended to
the case of general service times (e.g. Kelly
networks~\cite{Kelly_Networks_1979}) with insensitivity arguments,
e.g.~\cite[Chapter~3]{Walrand}. We first show how Jackson's Theorem
can be applied to our model in the case that the service times are
exponential. We then show that unlike many other models, the
equilibrium distribution depends on the service time distributions in
their entirety, not merely through their means and hence,
insensitivity arguments do not apply to our model.

Theorem~\ref{thrm:jackson} is the version of Jackson's Theorem
presented in \cite[Chapter~2]{Chen_Yao}. Jackson's Theorem was
originally presented in \cite{Jackson_1963}.

\begin{theorem}[Jackson's Theorem]

  Suppose we have an $N$ node queueing network in which jobs arrive at
  rate $\lambda$. Arriving jobs are independently routed to node $j$
  with probability $p_{0j}$ where $\sum_{j=1}^N p_{0j} = 1$. Upon
  service completion at node $i$, a job is routed to node $j$ with
  probability $P_{ij}$ and leaves the network with probability $p_{i0}
  = 1 - \sum_{j=1}^N P_{ij}$. We assume that $P$ is substochastic
  (i.e. at least one of the row sums is strictly less than one so the
  network is open).

  When there are $x_i$ jobs at node $i$, the exponential service time
  has rate $\mu_i(x_i)$ where $\mu_i : \Zbb_+ \goesto \Rbb_+$ with
  $\mu_i(0) = 0$ and $\mu_i(x) > 0$ for all $x > 0$.

  Let $p_0 \in \Rbb^N$ where the $i^{th}$ entry is $p_{0i}$ and define
  $\alpha$ as the solution\footnote{A unique solution exists because
    $P$ is substochastic. Furthermore, this solution is
    non-negative. See \cite[Chapter~2]{Chen_Yao} for details.}  to the
  following equation:
  $$\alpha = \lambda p_0 + P^T \alpha$$
  Let $M_i(n)$ be defined as 
  $$M_i(n) = \prod_{j=1}^n \mu_i(j)$$
  and assume the following:
  $$\sum_{n=1}^\infty \frac{\alpha_i^n}{M_i(n)} < \infty$$

  Let $X_i$ be the number of jobs at node $i$ in equilibrium. Then
  $$\P(X_1 = x_1, \hdots, X_N = x_N) = \prod_{i=1}^N \P(X_i = x_i)$$
  where
  $$\P(X_i = x) = \P(X_i = 0) \alpha_i^n / M_i(n)$$ and 
  $$\P(X_i = 0) = \left(1 + \sum_{n=1}^\infty\frac{\alpha_i^n}{M_i(n)}\right)^{-1}.$$

\label{thrm:jackson}
\end{theorem}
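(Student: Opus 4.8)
The plan is to realize the network as a continuous-time Markov chain $X(t) = (X_1(t),\dots,X_N(t))$ on $\Zbb_+^N$ and to verify directly that the claimed product $\pi(x) = \prod_{i=1}^N \P(X_i = x_i)$ is its stationary distribution. Writing $e_i$ for the $i$-th unit vector, the nonzero transition rates out of a state $x$ are: an external arrival to node $j$, at rate $\lambda p_{0j}$, sending $x \goesto x + e_j$; a service completion at node $i$ followed by routing to node $j$, at rate $\mu_i(x_i)P_{ij}$, sending $x \goesto x - e_i + e_j$; and a service completion at node $i$ followed by departure from the network, at rate $\mu_i(x_i)p_{i0}$, sending $x \goesto x - e_i$. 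Note that $\mu_i(0) = 0$ makes completions at empty nodes vacuous, and any self-routing probabilities $P_{ii}$ contribute only self-loops, which do not affect balance and may be taken to be zero.

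The single algebraic fact that drives everything is the telescoping identity $\mu_i(x_i)\,\pi(x) = \alpha_i\,\pi(x - e_i)$ for $x_i \ge 1$, which is immediate from $\P(X_i = n) = \P(X_i = 0)\,\alpha_i^n / M_i(n)$ together with $M_i(n) = \mu_i(n)\,M_i(n-1)$. Moreover, the summability hypothesis $\sum_{n \ge 1} \alpha_i^n / M_i(n) < \infty$ is exactly what is needed for each $\P(X_i = 0)$ to be a well-defined positive number, hence for $\pi$ to be a genuine probability distribution on $\Zbb_+^N$.

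Rather than attacking global balance head-on, I would verify the stronger partial-balance equations, which the product form renders transparent. For each node $j$ and each state $x$ with $x_j \ge 1$,
$$\pi(x)\,\mu_j(x_j) = \pi(x - e_j)\,\lambda p_{0j} + \sum_{i=1}^N \pi(x - e_j + e_i)\,\mu_i(x_i + 1)\,P_{ij},$$
i.e.\ the rate of leaving $x$ via a departure from node $j$ equals the rate of entering $x$ via an arrival to node $j$; and separately the exterior balance $\pi(x)\,\lambda = \sum_{i=1}^N \pi(x + e_i)\,\mu_i(x_i+1)\,p_{i0}$. Applying the telescoping identity to every term, the node-$j$ equation collapses to $\alpha_j\,\pi(x - e_j) = \pi(x - e_j)\bigl(\lambda p_{0j} + \sum_i \alpha_i P_{ij}\bigr)$, which holds precisely because $\alpha$ solves $\alpha = \lambda p_0 + P^T \alpha$; and the exterior equation collapses to the scalar identity $\lambda = \sum_i \alpha_i p_{i0}$, obtained by summing the traffic equations over all nodes and using $\sum_j p_{0j} = 1$ and $p_{i0} = 1 - \sum_j P_{ij}$. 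Adding the $N$ node equations and the exterior equation reproduces full global balance at $x$, since every in-transition to $x$ is accounted for in exactly one of these equations while the total out-rate is $\lambda + \sum_j \mu_j(x_j)$.

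To finish, I would observe that the chain is irreducible on $\Zbb_+^N$ (the network is open, so every state communicates with the empty state) and non-explosive (the total population is dominated by the Poisson arrival stream), so a summable solution of the global balance equations, once normalized, is the unique stationary distribution and the chain is positive recurrent; this gives the stated equilibrium law and its product form. The only real obstacle here is bookkeeping rather than depth: one must handle the boundary states where some $x_i = 0$ (so that certain displaced states $x - e_j$ or $x - e_j + e_i$ are infeasible and their terms drop, consistently with $\mu_i(0) = 0$), dispose of self-loops, and confirm that the in- and out-transitions partition exactly among the $N+1$ partial-balance equations. A careless argument would slip exactly at this partition, so that is the step to write out with care.
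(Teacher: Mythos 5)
The paper does not actually prove this statement: it is quoted verbatim from Chen and Yao \cite[Chapter~2]{Chen_Yao} as background, so there is no in-paper argument to compare yours against. Taken on its own terms, your proposal is the standard and correct direct verification of Jackson's theorem. The engine of the argument is exactly right: the telescoping identity $\mu_i(x_i)\,\pi(x) = \alpha_i\,\pi(x-e_i)$ reduces each node-$j$ partial-balance equation to the $j$-th traffic equation $\alpha_j = \lambda p_{0j} + \sum_i \alpha_i P_{ij}$, the exterior equation reduces to $\lambda = \sum_i \alpha_i p_{i0}$ (which follows by summing the traffic equations), and the $N+1$ partial-balance equations sum to global balance because the in-transitions to $x$ partition by ``which node received the arriving job'' versus ``a job left the system.'' Your identification of the boundary cases ($x_j = 0$, handled by $\mu_j(0)=0$) and self-loops as the places where bookkeeping errors creep in is also the right caution; note in fact that the self-loop term needs no special dispensation, since $\pi(x)\mu_j(x_j)P_{jj} = \alpha_j P_{jj}\,\pi(x-e_j)$ by the same telescoping identity, so it folds into the traffic equation along with the other terms. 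The only soft spot is the closing paragraph: irreducibility on all of $\Zbb_+^N$ is not automatic from openness alone --- it additionally requires that every node be reachable from the exterior (equivalently $\alpha_i > 0$ for all $i$), and openness should be read as $P$ being transient ($\sum_n P^n < \infty$, as the footnote's existence claim for $\alpha$ presumes) rather than merely one row sum being strictly less than one. These are standard implicit regularity assumptions in every statement of Jackson's theorem, so this is a caveat to record, not a gap that invalidates the argument. Your non-explosivity argument (total population dominated by the arrival Poisson process) and the appeal to uniqueness of a normalizable invariant measure for an irreducible, non-explosive chain are both sound.
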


To apply Theorem~\ref{thrm:jackson}, we need to focus on the case when
$S_{n,k}$ is exponentially distributed with rate $\mu_{n,k}$. Now we
will need to make use of the following elementary result:

\begin{lemma}[{\cite[Fact~2.3.1]{Walrand}}]
  \label{lem:exp}
  Let $\set{\sigma_k,\, 1 \leq k \leq K}$ be independent exponential
  random variables with respective rates $\lambda_k$. Then for all $t
  \geq 0$ and $1 \leq i \leq K$, 
  $$\P(\sigma_i = \min\set{\sigma_k,\, 1 \leq k \leq K} > t) = \frac{\lambda_i}{\lambda}e^{-\lambda t}$$
  with $\lambda = \lambda_1 + \cdots + \lambda_K$.
\end{lemma}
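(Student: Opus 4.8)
The plan is to prove Lemma~\ref{lem:exp} by a direct computation, conditioning on the value of $\sigma_i$ and using independence. Specifically, I would write the event $\set{\sigma_i = \min_k \sigma_k > t}$ as the event that $\sigma_i > t$ together with the event that $\sigma_k > \sigma_i$ for every $k \neq i$, and then integrate over the density of $\sigma_i$.

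First I would fix $i$ and note that, by independence, for any $s \geq 0$ we have $\P(\sigma_k > s \text{ for all } k \neq i) = \prod_{k \neq i} e^{-\lambda_k s} = e^{-(\lambda - \lambda_i)s}$. Then I would compute
\begin{equation*}
  \P(\sigma_i = \min\set{\sigma_k} > t) = \int_t^\infty \lambda_i e^{-\lambda_i s} \, \P(\sigma_k > s \text{ for all } k \neq i) \, ds = \int_t^\infty \lambda_i e^{-\lambda_i s} e^{-(\lambda - \lambda_i) s}\, ds.
\end{equation*}
The integrand simplifies to $\lambda_i e^{-\lambda s}$, and evaluating the integral from $t$ to $\infty$ gives $\frac{\lambda_i}{\lambda} e^{-\lambda t}$, which is exactly the claim. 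Setting $t = 0$ recovers the familiar fact that $\sigma_i$ achieves the minimum with probability $\lambda_i / \lambda$, which is a useful sanity check.

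There is no real obstacle here; the only point requiring a modicum of care is justifying that the ``ties have probability zero'' assumption lets us freely replace $\ge$ by $>$ in the competing events (so that the event decomposes cleanly and the densities can be used without ambiguity), and that Fubini/Tonelli applies to interchange the integral with the product over $k \neq i$ — both are immediate since all integrands are non-negative and the random variables are independent with continuous distributions. Since the paper cites this as \cite[Fact~2.3.1]{Walrand}, one could alternatively just invoke the reference, but the self-contained one-line integral above is cleaner and keeps the exposition closed.
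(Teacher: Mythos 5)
Your computation is correct: conditioning on $\sigma_i = s$, using independence to write $\P(\sigma_k > s \text{ for all } k \neq i) = e^{-(\lambda - \lambda_i)s}$, and integrating $\lambda_i e^{-\lambda s}$ over $[t,\infty)$ gives exactly $\tfrac{\lambda_i}{\lambda}e^{-\lambda t}$. The paper itself offers no proof of this lemma --- it simply cites \cite[Fact~2.3.1]{Walrand} --- so there is nothing to diverge from; your one-line integral is the standard derivation, and your only aside worth trimming is the remark about ties, since for independent continuous random variables the event of a tie automatically has probability zero and needs no separate assumption.
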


To quote \cite{Walrand}, this lemma ``states that the index of the
smallest of $K$ independent exponential random variables is
independent of the value of that minimum which is also exponentially
distributed.''  This tells us that a customer at node $n$ experiences
an exponential service time with rate $\mu_n = \sum_{k=1}^K \mu_{n,k}$
and that after service is complete the customer is routed to station
$m$ with probability 
$$Q_{n,m} = \sum_{k=1}^K \frac{\mu_{n,k}}{\mu_n} P^k_{n,m}$$
independently of the service time. In addition, we know that the
service rate function at node $n$ is $x \mapsto \mu_n x$. If $X_n$ is
the number of customers at node $n$ in equilibrium, then applying
Theorem~\ref{thrm:jackson} gives us that
$$X_n \sim \Poisson(\rho_n)$$
where $\rho_n = \alpha_n / \mu_n$ and $\alpha$ solves
$$\alpha = \lambda + Q^T \alpha.$$
Furthermore, Theorem~\ref{thrm:jackson} tells us that
$$\P(X_1 = x_1, \hdots, X_N = x_N) = \prod_{n=1}^N \P(X_n = x_n),$$
i.e. we have a product-form distribution.

This seems to be a very powerful result. Not only do we have a
product-form distribution, the distribution depends on the service
times only through their first moments. It is tempting to assume an
insensitivity result such as \cite[Theorem~3.3.2]{Walrand}:

\begin{theorem}
  \label{thrm:walrand}
  Suppose the $N$ nodes in Theorem~\ref{thrm:jackson} are infinite
  server queues with general and independent service times. Assume
  $\mu_i^{-1}$ is the mean service time at node $i$. We continue to
  assume that the arrivals to the network are Poisson and that the
  routing decisions are stochastic and independent of the past
  evolution of the network. Then
  $$\P(X_1 = x_1, \hdots, X_N = x_N) = \prod_{i=1}^N \P(X_i = x_i)$$
  where
  $$X_i \sim \Poisson(\alpha_i/\mu_i)$$
  and $\alpha$ solves the same linear flow equations from
  Theorem~\ref{thrm:jackson}.
\end{theorem}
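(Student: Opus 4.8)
The plan is to exploit the defining feature of an infinite-server network: customers never queue, so they traverse the network independently of one another. Formally, I would attach to each external arrival an i.i.d.\ ``trajectory'' mark --- the sequence of (node visited, sojourn time) pairs generated by the routing chain together with the service-time distributions, started at the node the customer is sent to on arrival. Because $P$ is substochastic and the mean service times $\mu_i^{-1}$ are finite, the embedded routing chain is transient with $(I - P^T)^{-1}$ finite, so a customer visits node $i$ a finite expected number of times and spends a finite expected total time in the network; in particular every customer departs almost surely and the network has a well-defined stationary regime.

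The core step is the marking theorem for Poisson processes. Fix the present instant and look backward in time: the ages of past external arrivals form a Poisson process of rate $\lambda$ on $[0,\infty)$, and each point (age $u$) carries an independent mark equal to that customer's location after elapsed time $u$ (some node $j$, or ``already departed''), computed from its i.i.d.\ trajectory. Independent marking of a Poisson process yields a Poisson random measure on the nodes, so $X_1,\dots,X_N$ are \emph{independent} and each $X_i$ is Poisson. This single step produces both the product form and the independence, and it uses the service-time distributions only insofar as trajectories must be a.s.\ finite --- nothing about their shape enters.

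It remains to identify $\E[X_i]$. I would use a rate-conservation (Little's-law) argument: $\E[X_i]$ equals the external arrival rate $\lambda$ times the expected total time a single customer spends at node $i$, which in turn is (expected number of visits to node $i$) $\times \mu_i^{-1}$. The expected number of visits to node $i$ per customer is $\alpha_i/\lambda$, where $\alpha$ solves $\alpha = \lambda p_0 + P^T\alpha$ --- summing per-customer visit counts over the $\lambda$ arrivals per unit time is exactly the flow-balance recursion of Theorem~\ref{thrm:jackson}. Hence $\E[X_i] = \lambda\cdot(\alpha_i/\lambda)\cdot\mu_i^{-1} = \alpha_i/\mu_i$, so $X_i\sim\Poisson(\alpha_i/\mu_i)$ with the same $\alpha$ as in Theorem~\ref{thrm:jackson}.

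The main obstacle is making the Poisson-marking step rigorous in stationarity: one must construct the network on the whole time line (or take a limit from the empty state), check that ``location after elapsed time $u$'' is a measurable function of the trajectory mark, and verify the integrability bound $\lambda\int_0^\infty \P(\text{still at node } i \text{ after elapsed time } u)\,du = \alpha_i/\mu_i < \infty$, which is precisely what makes the limiting Poisson measure locally finite. A shortcut that avoids this bookkeeping is to note that an infinite-server node is a symmetric, hence quasi-reversible, queue in Kelly's sense and to invoke the product-form theorem for networks of quasi-reversible nodes under Markovian routing; I would mention this as a cross-check but favor the direct trajectory argument because it exposes why only the service-time means matter.
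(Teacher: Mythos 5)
Your argument is correct, and it is genuinely different from the route the paper takes: the paper does not prove Theorem~\ref{thrm:walrand} at all, but imports it as \cite[Theorem~3.3.2]{Walrand}, where the proof goes through the machinery of symmetric/quasi-reversible queues and partial balance. Your direct construction --- i.i.d.\ trajectory marks on the stationary external Poisson process, followed by the marking/displacement theorem --- is a self-contained and, for this special case, more transparent derivation. It makes both conclusions fall out of a single step (independent marking gives a Poisson random measure over nodes, hence independence of $X_1,\dots,X_N$ and Poisson marginals simultaneously), and it explains the insensitivity honestly: the only functional of the service-time law that survives is $\int_0^\infty \P(\text{at node } i \text{ after elapsed time } u)\,du = \E[\text{total time at node }i]$, by Fubini. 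Your identification of the mean is also right: dividing the flow equation $\alpha = \lambda p_0 + P^T\alpha$ by $\lambda$ shows $\alpha_i/\lambda = \bigl((I-P^T)^{-1}p_0\bigr)_i$ is the expected number of visits to node $i$ per customer, and Wald's identity (visit count is determined by the routing chain, independent of the i.i.d.\ service draws) gives expected total time $(\alpha_i/\lambda)\mu_i^{-1}$, hence $\E[X_i]=\alpha_i/\mu_i$. The trade-off versus the quasi-reversibility route is scope: your argument leans entirely on the fact that infinite-server customers never interact, so it does not extend to the other symmetric disciplines (processor sharing, preemptive LIFO) that Walrand's theorem also covers; but for the networks actually used in this paper that generality is not needed, and your version has the advantage of exposing exactly which hypotheses (Markovian routing, a.s.\ finite trajectories, finite visit counts from substochasticity) carry the proof. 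The technical caveats you flag --- constructing the process on the whole line or from the empty state, and checking local finiteness via $\lambda\E[\text{time at }i]<\infty$ --- are the right ones and are routine to discharge.
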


Informally, this means that we can apply Theorem~\ref{thrm:jackson}
even when the service times are non-exponential. This kind of result
applies for queues besides the infinite server queue; the key
requirement is that the queue be quasi-reversible (see
\cite[Chapter~3]{Walrand} and the references therein for a discussion
of quasi-reversible queues and insensitivity results).

For our model, when considering exponential services times, the
probability that a customer at node $n$ will go to node $m$ after
service is $Q_{n,m}$. However, for general service times the
probability is
$$\tilde Q_{n,m} = \sum_{k=1}^K \P(S_{n,k} = \min\set{S_{n,1}, \hdots, S_{n, K}}) P^k_{n,m}$$
and in general $\tilde Q_{n,m} \ne Q_{n,m}$. Consequently, assuming
full insensitivity does not seem correct. Another approach would be to
use $\tilde Q$ instead of $Q$ but continue to use $\mu_n$ as the
service rate at node $n$. Indeed, the term ``insensitivity'' typically
refers to insensitivity of the service time distributions so it makes
sense that the routing matrix should change. Although less na\"ive
than assuming full insensitivity, we will see that this approach is
also incorrect.

Both na\"ive insensitivity approaches are incorrect for two
reasons. First note that if the service distributions change, then
$\mu_n^{-1}$ will generally not be the mean service time at node
$n$. However, even if the mean service times were preserved, applying
Theorem~\ref{thrm:walrand} would still be incorrect because the
routing is not Markovian. As mentioned earlier, Markovian routing is a
form of probabilistic routing for which the routing decisions do not
depend on the past evolution of the network. Markovian routing is
required for both Theorem~\ref{thrm:jackson} and
Theorem~\ref{thrm:walrand}. In our model, the routing decisions are
based on the service time that achieves the minimum and consequently
the routing decisions are not independent of the past evolution of the
network.

We see that Lemma~\ref{lem:exp} and Theorem~\ref{thrm:jackson} provide
us with a product-form result when the service times are exponential,
but these same arguments do not extend to the case of non-exponential
service times. We will demonstrate by simulation in
Section~\ref{sec:simulation} that not only are the two na\"ive
approaches (assuming full insensitivity and assuming insensitivity of
the service times) not mathematically justified, they also give
incorrect forms for the equilibrium distribution.

\section{A Product-Form Equilibrium Distribution\label{sec:result}}
The discussion in Section~\ref{sec:jackson} explains why typical
mathematical arguments cannot be applied to find a product-form
distribution for the queueing model described in
Section~\ref{sec:model}. However, in this section we are able to
explicitly characterize the stationary distribution as a
product-form. The key insight is to construct a queueing network for
which the routing is Markovian and also has a stochastically
equivalent equilibrium distribution. We are able to do this because
infinite server queueing nodes can be represented as several infinite
server queueing nodes acting in parallel. Although it may seem that
adding more nodes adds more servers, because we are dealing with
infinite server queues to begin with, the total number of servers is
actually preserved.

\begin{theorem}
  \label{thrm:result}
  Consider the queueing model from Section~\ref{sec:model} and let
  $X_n$ be the number of customers at node $n$ in equilibrium. For $n
  \in \set{1, \hdots, N}$ and $k \in \set{1, \hdots, K}$, define
  $$\mu_{n,k} = \left(\E[S_{n,k} | S_{n,k} = S_n]\right)^{-1}$$
  and let $\alpha \in \Rbb^{N \times K}$ be defined by the following
  equations:
  \begin{equation}
    \label{eq:flow}
    \alpha_{n,k} %
    = \P(S_{n,k} = S_n)\lambda_n %
    + \sum_{m,\ell} \P(S_{n,k} = S_n)P^\ell_{m,n}\alpha_{m,\ell}%
  \end{equation}
  Then
  $$X_n \sim \Poisson(\rho_n)$$
  where $\rho_n = \sum_{k=1}^K \alpha_{n,k} / \mu_{n,k}$.
  Furthermore,
  $$\P(X_1 = x_1, \hdots, X_N = x_N) = \prod_{n=1}^N \P(X_n = x_n),$$
  i.e. we have a product-form distribution.
\end{theorem}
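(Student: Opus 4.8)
The plan is to build an auxiliary queueing network with $NK$ infinite-server nodes on which the routing \emph{is} Markovian, apply the insensitivity result Theorem~\ref{thrm:walrand} to it, and then transfer the conclusion back to the original network. Concretely, I would split each node $n$ of the original network into $K$ parallel infinite-server nodes $(n,1),\dots,(n,K)$, where node $(n,k)$ is meant to hold precisely those customers currently in service at node $n$ whose competing time $S_{n,k}$ achieves the minimum. Node $(n,k)$ receives external Poisson arrivals at rate $\lambda_n\,\P(S_{n,k}=S_n)$, has service times distributed as $S_{n,k}$ conditioned on $\{S_{n,k}=S_n\}$ (so its mean service time is $\mu_{n,k}^{-1}$ with $\mu_{n,k}$ as in the statement), and on service completion routes a customer to node $(m,\ell)$ with probability $P^k_{n,m}\,\P(S_{m,\ell}=S_m)$, leaving the network with the remaining probability $1-\sum_{m}P^k_{n,m}$. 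Sub-nodes with $\P(S_{n,k}=S_n)=0$ receive no customers and can be discarded, and I will assume the surviving conditional means are finite and positive.

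The first substantive step is to show that this auxiliary network is stochastically equivalent to the original one, in the sense that $(X_1,\dots,X_N)$ has the same law as the vector of summed sub-node occupancies $\bigl(\sum_{k=1}^K X_{1,k}^{\mathrm{aux}},\dots,\sum_{k=1}^K X_{N,k}^{\mathrm{aux}}\bigr)$ — more precisely that, under a suitable coupling, $X_n = \sum_{k=1}^K X_{n,k}^{\mathrm{aux}}$. The key point is that a customer passing through node $n$ in the original network experiences a (service time, next node) pair whose joint law is the mixture, over $k$ with weights $\P(S_{n,k}=S_n)$, of the pair (law of $S_{n,k}$ given $S_{n,k}=S_n$, routing according to $P^k_{n,\cdot}$). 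Revealing the winning index $k$ at the \emph{start} of service, drawing the service time from the corresponding conditional law, and then sending the customer through sub-node $(n,k)$ reproduces exactly this joint law; it is merely a disintegration of the service/routing mechanism, and summing the sub-node occupancies at location $n$ recovers $X_n$. I expect this coupling/disintegration argument to be the main obstacle, since it is where the non-Markovian dependence of routing on past service times must be carefully localized and removed.

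Next I would verify that the auxiliary network satisfies the hypotheses of Theorem~\ref{thrm:walrand}: every node is an infinite-server queue; the service times are general but independent across customers and across repeat visits (inherited from the corresponding assumption on the $S_{n,k}^{(i)}$); external arrivals are Poisson, and the independent thinning that assigns each arrival — external or internal — to a sub-node $(n,k)$ with probability $\P(S_{n,k}=S_n)$ keeps every arrival stream Poisson; and, crucially, the routing probabilities $P^k_{n,m}\,\P(S_{m,\ell}=S_m)$ are fixed constants not depending on the history, so the routing is Markovian. The network is open because each $P^k$ inherits substochasticity from the original routing. Hence Theorem~\ref{thrm:walrand} applies and yields a product form with $X_{n,k}^{\mathrm{aux}}\sim\Poisson(\alpha_{n,k}/\mu_{n,k})$, where $\alpha$ solves the flow equations of Theorem~\ref{thrm:jackson} for the auxiliary network; a direct substitution of the auxiliary arrival rates and routing probabilities shows these equations are exactly~\eqref{eq:flow}, and the infinite-server summability condition $\sum_{x\ge1}\alpha_{n,k}^x/(\mu_{n,k}^x\,x!)<\infty$ holds automatically.

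Finally I would assemble the conclusion. Since $X_n=\sum_{k=1}^K X_{n,k}^{\mathrm{aux}}$ is a sum of independent Poisson variables, $X_n\sim\Poisson(\rho_n)$ with $\rho_n=\sum_{k=1}^K\alpha_{n,k}/\mu_{n,k}$, as claimed. Moreover the entire family $\{X_{n,k}^{\mathrm{aux}}\}_{n,k}$ is mutually independent by the product form, and $X_1,\dots,X_N$ are functions of disjoint blocks of this family, so they are independent and $\P(X_1=x_1,\dots,X_N=x_N)=\prod_{n=1}^N\P(X_n=x_n)$.
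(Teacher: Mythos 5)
Your proposal is correct and follows essentially the same route as the paper: the same splitting of node $n$ into $K$ parallel infinite-server sub-nodes $(n,k)$ with thinned Poisson arrivals of rate $\lambda_n\P(S_{n,k}=S_n)$, conditional service laws $S_{n,k}\mid(S_{n,k}=S_n)$, Markovian routing probabilities $P^k_{n,m}\P(S_{m,\ell}=S_m)$, and an application of Theorem~\ref{thrm:walrand} followed by summing independent Poisson sub-node occupancies. The only cosmetic difference is that you phrase the equivalence as an explicit coupling/disintegration revealing the winning index at the start of service, whereas the paper verifies the same equivalence by matching service-time mixtures, external arrival rates, and routing probabilities supernode by supernode.
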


\begin{proof}
  Consider the following $N \times K$ node network of infinite server
  queues. At node $(n, k)$, the service time is distributed as
  $$S_{n,k} | (S_{n,k} = S_n)$$
  so the service rate at node $(n, k)$ is $\mu_{n,k}$. Upon completing
  service at node $(n, k)$ a customer is routed in a Markovian fashion
  to node $(m, \ell)$ with probability
  $$P^k_{n,m}\P(S_{m,\ell} = S_m).$$
  With probability
  $$1 - \sum_{m,\ell} P^k_{n,m}\P(S_{m,\ell} = S_m)$$
  a customer leaves the network after service at node $(n,
  k)$. Customers arrive externally to node $(n, k)$ according to a
  Poisson process with rate $\P(S_{n,k} = S_n)\lambda_n$.

  Because the arrivals are Poisson and
  Poisson-Arrivals-See-Time-Averages (PASTA~\cite{Wolff_PASTA_1982}),
  we can show that the equilibrium distribution of this network is
  stochastically equivalent to the equilibrium distribution of the
  network in Section~\ref{sec:model} by considering the perspective of
  customers who arrive to various nodes. We will refer to nodes
  $(n,1), \hdots, (n,K)$ as supernode $n$ and we will show that in
  equilibrium supernode $n$ is stochastically equivalent to node $n$
  in the original network. Furthermore, we will show that customers
  are routed between supernodes in a manner that is stochastically
  equivalent to the manner in which customers are routed between nodes
  in the original network.

  \begin{enumerate}
  \item{\it Service Times:}
    Among all (internal and external) arrivals to supernode $n$, the
    service time distribution is
    $$\sum_{k=1}^K \P(S_{n,k} = S_n) \times S_{n,k} | (S_{n,k} = S_n) = S_n.$$
    In addition, in the original network a customer will experience
    service time $S_{n,k} | (S_{n,k} = S_n)$ with probability
    $\P(S_{n,k} = S_n)$. Therefore, we have that the service times in
    supernode $n$ are stochastically equivalent to the services times at
    node $n$ in the original network.

  \item{\it External Arrivals:} The total external arrival rate to
    supernode $n$ is
    $$\sum_{k=1}^K\P(S_{n,k} = S_n)\lambda_n = \lambda_n\sum_{k=1}^K\P(S_{n,k} = S_n) = \lambda_n$$
    which is the external arrival rate to node $n$ in the original
    network. 

  \item{\it Routing:} Now consider how customers are routed from
    supernode $n$ to supernode $m$. First note that since a customer
    at node $(n, k)$ is routed to node $(m, \ell)$ with probability
    $P^k_{n,m} \P(S_{m,\ell} = S_m)$, the probability of being routed
    from supernode $n$ to supernode $m$ is
    \begin{align*}
      &\sum_k \sum_\ell P^k_{n,m} \P(S_{m,\ell} = S_m) \\
      &= \sum_k P^k_{n,m} \sum_\ell  \P(S_{m,\ell} = S_m)%
      = \sum_k P^k_{n,m}%
    \end{align*}
    which is the same probability of being routed from node $n$ to
    node $m$ in the original network. This is true for all $n$ and $m$
    this also implies that the probability of exiting the system after
    service at supernode $n$ is the same as the probability of exiting
    the system after leaving node $n$ in the original network.

    Now consider the relationship between service times and routing
    decisions. If a customer at supernode $n$ experiences a service
    time that is distributed according to $S_{n,k} | (S_{n,k} = S_n)$,
    that customer is then routed to node $(m, \ell)$ with probability
    $P^k_{n,m} \P(S_{m,\ell} = S_m)$. Since
    \begin{align*}
      \sum_\ell P^k_{n,m}\P(S_{m,\ell} = S_m)%
      &= P^k_{n,m}\sum_\ell \P(S_{m,\ell} = S_m)\\
      &= P^k_{n,m}
    \end{align*}
    this shows that relationship between service times and routing
    decision is maintained.
  \end{enumerate}
  
  Because the external arrivals and the routing decisions are
  equivalent, the customer flows between supernodes are equivalent to
  the corresponding customer flows between nodes in the original
  network. We can conclude that the supernodes in the $N \times K$
  node network are equivalent to the nodes in the $N$ node network.

  Because we have shown equivalence of the networks, we can find the
  stationary distribution of the original network by finding the
  stationary distribution of the new network. Each node in the new
  network is a $M/G/\infty$ queue and the routing between nodes is
  Markovian. As a result, we can apply Theorem~\ref{thrm:walrand}. Let
  $X_{n,k}$ be the number of customers at node $(n,k)$ in
  equilibrium. Equation~\ref{eq:flow} defines the flows in the
  network: $\alpha_{n, k}$ is the total arrival rate to node $(n,
  k)$. Indeed, $\P(S_{n,k} = S_n)\lambda_n$ is the external arrival
  rate to node $(n, k)$. In addition, customers are routed from node
  $(m, \ell)$ to node $(n, k)$ with probability $\P(S_{n,k} =
  S_n)P^\ell_{m,n}$ so the second term in Equation~\ref{eq:flow} is
  the internal arrival rate. We are assuming that the network is open
  so we know that Equation~\ref{eq:flow} has a unique solution that is
  non-negative~\footnote{See \cite[Chapter~2]{Chen_Yao} for details.}.
  Since $\mu_{n,k}$ is the service rate at node $(n, k)$ and node $(n,
  k)$ is an infinite server queue, we have that
  $$X_{n,k} \sim \Poisson(\rho_{n,k})$$
  where $\rho_{n,k} = \alpha_{n,k}/ \mu_{n,k}$. Since
  $$X_n \overset{D}{=} X_{n,1} + \cdots + X_{n, K}$$
  we have that $X_n \sim \Poisson(\rho_n)$. The stochastic equivalence
  of the equilibrium distribution of the entire network allows us to
  conclude that the desired product-form distribution holds.
\end{proof}

\section{Simulation Verification\label{sec:simulation}}
In this section we focus on a simple example. We illustrate the proof
of Theorem~\ref{thrm:result} by explicitly showing how the constructed
network corresponds to the original network. We then simulated the
original network to demonstrate that the product-form in
Theorem~\ref{thrm:result} is correct. We also compare these
simulations results with the two na\"ive approaches from
Section~\ref{sec:jackson} (assuming full insensitivity and assuming
insensitivity of the service times) to show that the na\"ive approaches yield
incorrect answers.

\begin{figure}
  \begin{subfigure}{\columnwidth}
    \vspace{2mm}
    \centering
    \includegraphics[width=\textwidth]{./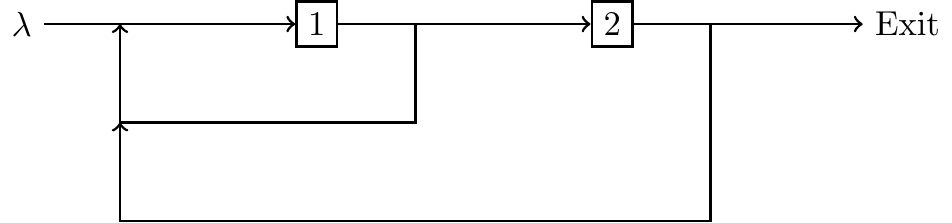}
    \caption{\label{fig:network_orig}Original Network}
    \vspace{2mm}
  \end{subfigure}
  \begin{subfigure}{\columnwidth}
    \centering
    \includegraphics[width=\textwidth]{./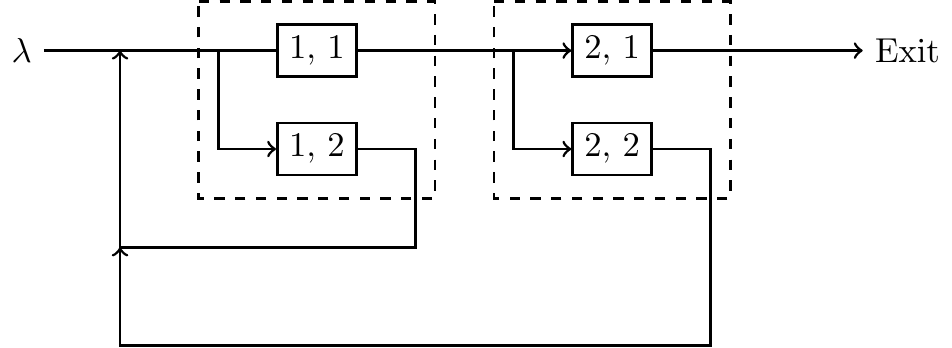}
    \caption{\label{fig:network_equiv}Equivalent Network}
  \end{subfigure}
  \caption{The network and the equivalent network discussed in
    Section~\ref{sec:simulation}.  The supernodes in the equivalent
    network are boxed with dashed lines.\label{fig:network}}
\end{figure}

We consider a the two node network in
Figure~\ref{fig:network_orig}. This network is simple model of
customers arriving to an online web service to complete two tasks in
sequence (e.g. a log-in process followed by a credit card
transaction). If customers do not complete either task within a fixed
deadline, then the customer is must start again at the
beginning. Specifically, the network can be described as follows:
\begin{itemize}
\item New customers arrive at node 1 according to a Poisson process of
  rate $\lambda$.
\item At node 1, customers have a service time $S_{1,1}$ that is
  exponentially distributed with rate $\mu_1$. There is also a
  deterministic deadline of $S_{1,2}$.
\item If $S_{1,1} < S_{1,2}$, then service is completed and the
  customer is routed with probability 1 to node 2. If $S_{1, 1} >
  S_{1, 2}$ then the service time exceeded the deadline and the
  customer is routed back to node 1.
\item At node 2, customers have a service time $S_{2, 1}$ that is
  exponentially distributed with rate $\mu_2$ and a deterministic
  deadline of $S_{2,2}$.
\item If $S_{2,1} < S_{2,2}$ then the customer exits the system and if
  $S_{2,1} > S_{2,2}$ then the service time was exceed and the
  customer is routed back to node 1.
\end{itemize}

The equivalent network is shown in
Figure~\ref{fig:network_equiv}. This network can be described as
follows:
\begin{itemize}
\item New customer arrivals are routed to node $(1, 1)$ with
  probability $1 - \exp(-\mu_1S_{1,2})$ and to node $(1, 2)$ with
  probability $\exp(-\mu_1S_{1,2})$. 
\item At node $(1,1)$ the service time is $S_{1,1} | (S_{1,1} <
  S_{1,2})$. Since this is a truncated exponential, the service rate
  at $(1, 1)$ is
  $$\mu_{1,1} = \left(\frac{1 - (\mu_1 S_{1,2} + 1)\exp(-\mu_1 S_{1,2})}{\mu_1 (1 - \exp(-\mu_1 S_{1,2}))}\right)^{-1}.$$
\item After service at node $(1,1)$, customers are routed to node $(2,
  1)$ with probability $1 - \exp(-\mu_2 S_{2,2})$ and to node $(2, 2)$
  with probability $\exp(-\mu_2 S_{2,2})$. 
\item The service rate at node $(1, 2)$ is $S_{1,2}$.
\item After service at node $(1, 2)$, customers are routed to node
  $(1, 1)$ with probability $1 - \exp(-\mu_1S_{1,2})$ and to node $(1,
  2)$ with probability $\exp(-\mu_1S_{1,2})$.
\item At node $(2,1)$ the service time is $S_{2,1} | (S_{2,1} <
  S_{2,2})$. Since this is a truncated exponential, the service rate
  at $(2, 1)$ is
  $$\mu_{2,1} = \left(\frac{1 - (\mu_2 S_{2,2} + 1)\exp(-\mu_2 S_{2,2})}{\mu_2 (1 - \exp(-\mu_2 S_{2,2}))}\right)^{-1}.$$
\item After service at node $(2, 1)$, customers exit the system.
\item At node $(2, 2)$ the service time is $S_{2,2}$.
\item After service at node $(2, 2)$, customers are routed to node
  $(1, 1)$ with probability $1 - \exp(-\mu_1S_{1,2})$ and to node $(1,
  2)$ with probability $\exp(-\mu_1S_{1,2})$.
\end{itemize}

Given this information, we can simulate the original network and
compare the empirical distribution to the distribution from
Theorem~\ref{thrm:result}. In addition, we can compare the
distribution attained by assuming full insensitivity and the
distribution attained by assuming insensitivity of the service
times. For simplicity, we focus on the case of $\lambda = S_{1,2} =
S_{2,1} = \mu_1 = \mu_2 = 1$. We simulate for $T = 10^4$ time units
with a time discretization of $dt = 10^{-3}$. We take a single run of
the simulation and take time-averages to estimate the true
distributions. We note that by relying on a single simulation run, we
are using the fact that the system is ergodic.

\begin{table}[b]
  \centering
  \begin{tabular}{|l||c|c|}
    \hline
    {} & Node 1 & Node 2\\
    \hline
    Simulated & 1.591 & 1.003\\
    \hline
    Exact & 1.582 & 1.000\\
    \hline
    Assuming Full Insensitivity & 2.000 & 1.000\\
    \hline
    Assuming Insensitivity of the Service Times & 1.251 & 0.791\\
    \hline
  \end{tabular}
  \caption{A comparison of the average number of customers at each node estimated via simulation and computed three different ways. The exact method refers to the result from Theorem~\ref{thrm:result} while the other two methods are na\"ive solution attempts discussed in Section~\ref{sec:jackson}.\label{tab:sim}}
\end{table}

First we compare the average number of customers in each
node. Table~\ref{tab:sim} shows the results. We see that although
assuming full insensitivity gives an accurate numerical result for
node 2, the na\"ive methods are both wildly incorrect for node 1. In
contrast, the exact result that is computed using
Theorem~\ref{thrm:result} agrees with the simulation.

Now that the na\"ive methods are seen to be inadequate, we can now
verify that the exact distribution agrees with the simulated
distribution. First we consider the marginal distributions of
customers at node 1 and at node 2. The results are shown in
Figure~\ref{fig:marginals}. As expected, the exact result agrees with
the simulation.

\begin{figure}
\begin{subfigure}{\columnwidth}
    \centering
    \includegraphics[width=\textwidth]{./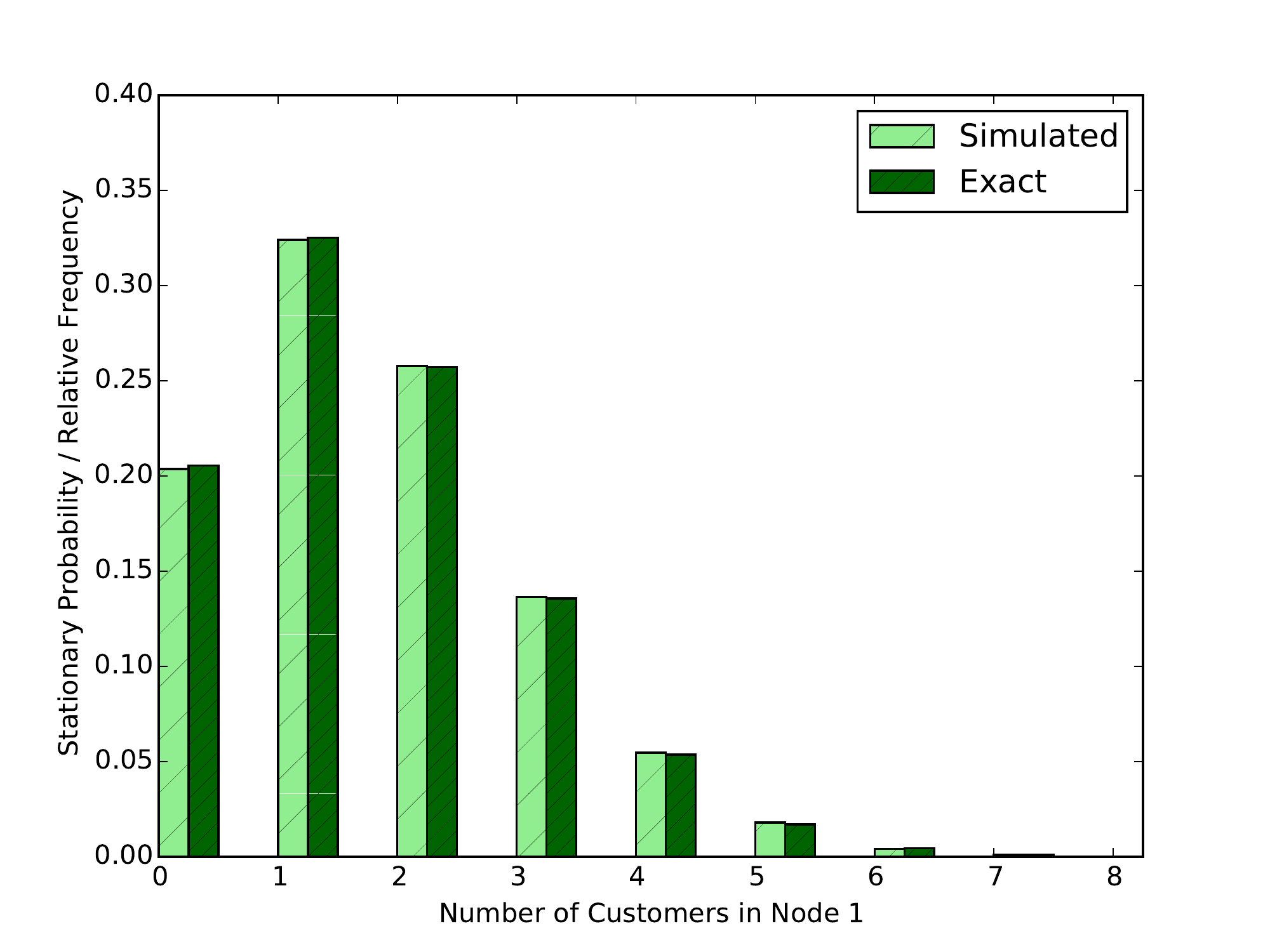}
    \caption{\label{fig:node1}Marginal Distribution of Node 1}
  \end{subfigure}
  \begin{subfigure}{\columnwidth}
    \centering
    \includegraphics[width=\textwidth]{./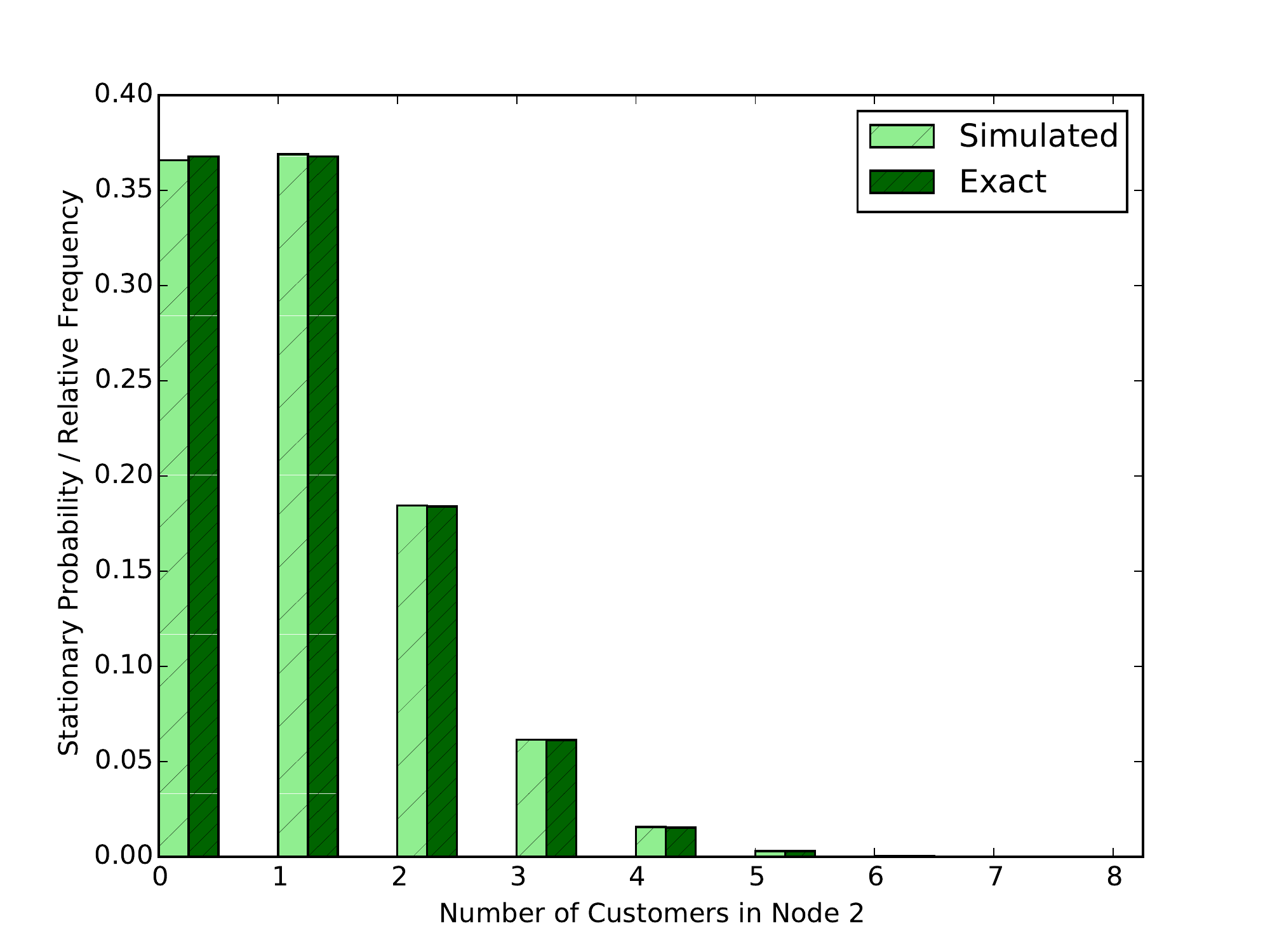}
    \caption{\label{fig:node2}Marginal Distribution of Node 2}
  \end{subfigure}  
  \caption{A comparison of the simulated marginal distributions and
    the exact marginals at each node. We see that the exact result
    from Theorem~\ref{thrm:result} agrees with the
    simulation.\label{fig:marginals}}
\end{figure}

Theorem~\ref{thrm:result} also says that the distribution is the
product of the marginals. Let $\hat\pi(x_1, x_2)$ be the empirical
probability of having $x_1$ customers at node 1 and $x_2$ customers at
node 2. Let $\hat\pi_1(x_1)$ be the empirical probability of having
$x_1$ customers at node 1. Let $\hat\pi_2(x_2)$ be the empirical
probability of having $x_2$ customers at node 1. Note that
$\hat\pi_1(\cdot)$ and $\hat\pi_2(\cdot)$ are shown in
Figure~\ref{fig:marginals} and they agree with the result from
Theorem~\ref{thrm:result}. In our simulation
$$\sup_{(x_1, x_2) \in \Zbb_+^2} \left| \hat\pi(x_1, x_2) - \hat\pi_1(x_1)\hat\pi_2(x_2)\right| = 0.00219$$
when rounded to three significant figures. This shows that the
empirical equilibrium distribution of the network is (approximately) a
product form distribution. This agrees with the product-form result
from Theorem~\ref{thrm:result}.

\section{Conclusions and Future Work\label{sec:conclusion}}
Motivated by timeouts in Internet services, we have formulated a model
for infinite server queueing networks in which routing decisions are
based on service time deadlines. In spite of the fact that the usual
theory does not apply, we have shown that the model does indeed have a
product-form stationary distribution. In addition to providing an
analytic proof, we have also provided a simulation which verifies the
results.

We have already noted that our proof is heavily dependent on the fact
that each node in the network has an infinite number of servers. As a
result, our analysis will not immediately transfer to more general
networks in which the nodes have finitely many servers. However, our
analysis does extend to the case in which only a portion of the
overall network has infinite server queues with deadline based
routing. We may also be able to extend our analysis to the case of
closed networks of infinite server queues. 

A less straightforward and more mathematically demanding extension of
these results would be to the case of more general arrival
processes. Nonstationary Poisson arrivals to infinite server queueing
networks were considered in \cite{Massey_1993} with the main result
being that product-form results still hold but are time-varying. Given
these previous results, this seems like a fruitful direction for future
work.

Finally, we note that deadlines are essential to many applications
besides Internet services such as wireless communication
\cite{Master_ICC2014, Master_ACC2015}, patient scheduling
\cite{Master_Myopic}, low latency computing \cite{LL1,LL2}, and
utility computing \cite{Z1,Z2,Z3}. Consequently, we feel that we may
be able to apply similar modeling ideas to other application domains.

\bibliographystyle{ieeetr}
\bibliography{NMaster_NBambos_CDC2016}

\end{document}